\newtheorem{theorem}{Theorem}[section]
\newtheorem{lemma}[theorem]{Lemma}
\newtheorem{proposition}[theorem]{Proposition}
\newtheorem{corollary}[theorem]{Corollary}
\theoremstyle{definition}
\newtheorem{definition}[theorem]{Definition}
\theoremstyle{remark}
\newtheorem{remark}[theorem]{Remark}
\newcommand{\cP}{\mathcal{P}}
\newcommand{\cE}{\mathcal{E}}
\newcommand{\V}{V}
\newcommand{\E}{E}
\newcommand{\del}{\delta}
\newcommand{\doi}[1]{\href{https://doi.org/#1}{doi:\nolinkurl{#1}}}
\title[3-critical 3-graphs of minimum degree 7]{On an Erd\H{o}s--Lov\'asz problem: 3-critical 3-graphs of minimum degree~7}
\author{Ruiliang Li}
\address{Tsinghua University, Beijing 100084, China}
\email{lirl23@mails.tsinghua.edu.cn}
\subjclass[2020]{05C65, 05C15, 05C35}
\keywords{3-uniform hypergraph, critical hypergraph, Property B, hypergraph colouring, transversal number, minimum degree}
\date{December 19, 2025}
\begin{document}

\begin{abstract}
Erd\H{o}s and Lov\'asz asked whether there exists a ``$3$-critical'' $3$-uniform hypergraph in which every vertex has degree at least $7$.
The original formulation does not specify what $3$-critical means, and two non-equivalent notions have appeared in the literature and in later discussions of the problem.
In this paper we resolve the question under both interpretations.
For the transversal interpretation (criticality with respect to the transversal number), we prove that a $3$-uniform hypergraph $H$ with $\tau(H)=3$ and $\tau(H-e)=2$ for every edge $e$ has at most $10$ edges; in particular, $\del(H)\le 6$, and this bound is sharp, witnessed by the complete $3$-graph $K^{(3)}_5$.
For the chromatic interpretation (criticality with respect to weak vertex-colourings), we give an explicit $3$-uniform hypergraph on $9$ vertices with $\chi(H)=3$ and minimum degree $\del(H)=7$ such that deleting any single edge or any single vertex makes it $2$-colourable.
The criticality of the example is certified by explicit witness $2$-colourings listed in the appendices, together with a short verification script.
\end{abstract}

\maketitle

\section{Introduction}\label{sec:intro}

A (weak) vertex-colouring of a hypergraph $H=(\V,\E)$ is an assignment of colours to the vertices such that no edge is monochromatic.
This is the standard extension of graph colouring and is closely tied to classical questions in extremal set theory (e.g.\ Property~B) and probabilistic combinatorics.
In their influential paper on $3$-chromatic hypergraphs, Erd\H{o}s and Lov\'asz~\cite{ErdosLovasz1975} investigated structural and extremal properties of non-$2$-colourable uniform hypergraphs and posed a number of problems that have continued to resonate in the subject.
Among them is the following local question:

\medskip
\noindent\emph{Does there exist a $3$-critical $3$-uniform hypergraph in which every vertex has degree at least $7$?}
\medskip

In the graph case, $k$-criticality is unambiguous and forces strong degree constraints; in particular, every $k$-critical graph has minimum degree at least $k-1$.
For hypergraphs, several competing notions of ``critical'' coexist in the literature, and the correct interpretation of Erd\H{o}s--Lov\'asz's question is therefore not immediate.
One usage (common in extremal set theory) is formulated in terms of hitting sets: a hypergraph is ``critical'' if its transversal number is exactly $t$ but deleting any edge reduces the transversal number.
Another usage (closer to graph colouring) is colour-criticality: a hypergraph is $k$-critical if its chromatic number is $k$ but every proper subhypergraph is $(k-1)$-colourable.
Both notions appear in standard references, and the ambiguity is explicitly noted in contemporary problem compilations; see, for instance,~\cite{ErdosProblems834}.
Since Erd\H{o}s and Lov\'asz posed the question in a paper centred around chromatic phenomena, the chromatic interpretation is arguably the most natural; nevertheless, the transversal interpretation is sufficiently widespread that it merits a definitive treatment as well.

The purpose of this paper is to resolve Erd\H{o}s--Lov\'asz's question under \emph{both} interpretations.
We use the terminology introduced in Section~\ref{sec:terminology} and state our main results here.

\begin{theorem}[Transversal interpretation]\label{thm:intro-transversal}
Let $H$ be a $3$-uniform hypergraph that is \emph{$\tau$-critical of order $3$}, i.e.\ $\tau(H)=3$ and $\tau(H-e)=2$ for every $e\in \E(H)$.
Then $|\E(H)|\le 10$ and consequently $\del(H)\le 6$.
Moreover, equality $\del(H)=6$ is attained by the complete $3$-graph $K^{(3)}_5$.
\end{theorem}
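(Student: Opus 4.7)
\emph{Plan.} The key observation is that the joint hypothesis $\tau(H)=3$ and $\tau(H-e)=2$ for every edge $e$ naturally produces a pair of cross-intersecting families to which Bollob\'as's two-families inequality applies.

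First, for each edge $e\in E(H)$, pick a 2-element transversal $T_e\subseteq V(H)$ of $H-e$; such a set exists because $\tau(H-e)=2$. I would immediately observe that $T_e\cap e=\emptyset$: otherwise $T_e$ would meet $e$ as well, giving a transversal of $H$ of size $2$ and contradicting $\tau(H)=3$. Consequently, for any edge $f\neq e$ we have $T_e\cap f\neq\emptyset$, since $f$ is still an edge of $H-e$. So the two families $\{e:e\in E(H)\}$ (of $3$-sets) and $\{T_e:e\in E(H)\}$ (of $2$-sets) satisfy $e\cap T_e=\emptyset$ and $f\cap T_e\neq\emptyset$ whenever $e\neq f$.

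Applying Bollob\'as's two-families inequality with parameters $3$ and $2$ immediately yields
\[
|E(H)|\le\binom{3+2}{3}=10.
\]
To convert this into a degree bound I would note that $\tau(H)=3$ forces $|V(H)|\ge 5$: fix any edge $\{a,b,c\}$; the pair $\{a,b\}$ is not a transversal, so some edge avoids it, contributing three further vertices. Combining this with $\sum_{v} d(v)=3|E(H)|\le 30$ gives $\delta(H)\le 30/|V(H)|\le 6$.

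For sharpness I would verify the hypotheses on $K^{(3)}_5$ directly: any two vertices miss the $3$-edge formed by the remaining three, so $\tau(K^{(3)}_5)=3$; and for each edge $e$, the complementary pair $V\setminus e$ meets every other $3$-subset of $V$, so $\tau(K^{(3)}_5-e)=2$. Since $K^{(3)}_5$ is $6$-regular, equality holds in the degree bound. The main technical point is recognizing the Bollob\'as configuration; once one sees it the edge bound is a single invocation of a classical inequality, so the genuine obstacle is spotting the reduction rather than performing any intricate combinatorics.
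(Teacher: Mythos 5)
Your proposal is correct and follows essentially the same route as the paper: choose a $2$-element transversal $T_e$ of $H-e$, observe it must be disjoint from $e$, apply Bollob\'as's set-pairs inequality to the pairs $(e,T_e)$ to get $|\E(H)|\le\binom{5}{3}=10$, combine with $|\V(H)|\ge 5$ and the degree sum to get $\del(H)\le 6$, and verify $K^{(3)}_5$ as the extremal example. The only cosmetic difference is that the paper (working from the weaker hypothesis $\tau(H-e)\le 2$) first rules out $|B_e|=1$, whereas you start from the transversal having size exactly $2$, which the theorem's hypothesis $\tau(H-e)=2$ grants directly.
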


\begin{theorem}[Chromatic interpretation]\label{thm:intro-chromatic}
There exists a $3$-uniform hypergraph $H$ that is \emph{critically $3$-chromatic} (under weak colourings) and satisfies $\del(H)\ge 7$.
In fact, there is such an $H$ on $9$ vertices with $\del(H)=7$.
\end{theorem}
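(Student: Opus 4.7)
The plan is to establish Theorem~\ref{thm:intro-chromatic} constructively: exhibit a concrete $3$-uniform hypergraph $H$ on $9$ vertices satisfying $\chi(H)=3$, $\del(H)=7$, and the criticality condition, then verify each property separately. The minimum-degree equality is immediate once $H$ is written down, so the work lies in (a) producing a promising candidate and (b) certifying that it is neither $2$-colourable itself nor one edge or vertex away from being fatally so.

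To find a candidate I would fix $\V(H)=\{1,\dots,9\}$ and search among symmetric constructions first. A minimum degree of $7$ forces $|\E(H)|\ge \lceil 9\cdot 7/3\rceil=21$; since $\binom{9}{3}=84$, there is ample room, and one may hope for a $\mathbb{Z}/9$- or $(\mathbb{Z}/3)^{2}$-invariant example built from one or two short orbits of triples, adjusted by a small local search to attain the desired degree sequence (uniformly $7$, or $7$s and $8$s). Symmetry is not essential for the statement, but it dramatically shortens the verifications that follow, so I would keep as much of it as possible while still killing the target $2$-colourings.

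The main obstacle is certifying $\chi(H)\ge 3$. By swapping colour classes we may assume the smaller class has size at most $4$, so only $\sum_{a=0}^{4}\binom{9}{a}=256$ bipartitions of $\V(H)$ need to be examined, and for each one must exhibit an edge of $H$ lying entirely in one part. This is finite but not painless by hand; I would run the short verification script promised in the abstract to confirm it, and in parallel try to compress the $256$ cases into a handful of $\mathrm{Aut}(H)$-orbits by exploiting whichever symmetry the construction affords, pointing to a single monochromatic triple in each orbit representative. Either route suffices, and the resulting certificate is self-contained.

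Finally, chromatic criticality demands, for every $e\in \E(H)$ and every $v\in \V(H)$, an explicit $2$-colouring of $H-e$ and of $H-v$ respectively. I would simply tabulate a witness $2$-colouring in each of the $|\E(H)|+9$ cases (these are what the appendices are meant to contain); checking that a given colouring avoids monochromatic edges is linear in $|\E(H)|$ and trivially automated. The real conceptual difficulty is locating a $9$-vertex $3$-graph dense enough to force $\chi\ge 3$ yet slack enough that deleting any single edge or single vertex restores $2$-colourability---this is a delicate balance on both sides, which is precisely why the theorem is best proved by producing one explicit example and certifying the three properties by short finite checks, rather than by any abstract argument.
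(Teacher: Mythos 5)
Your proposal correctly identifies the only realistic strategy --- exhibit an explicit $3$-graph on $9$ vertices and certify the three defining properties by finite checks --- and this is exactly the paper's strategy too: a concrete edge set, a degree count, a non-$2$-colourability argument, a $3$-colouring, and per-edge and per-vertex deletion certificates. But the proposal stops at the level of a search plan: the hypergraph itself is never written down. Since the entire content of Theorem~\ref{thm:intro-chromatic} is the \emph{existence} of such an $H$, a methodology without the object is not yet a proof; the ``delicate balance'' you flag at the end is precisely the part that must be exhibited, not described. For the record, the paper's $H$ has $22$ edges with $d_H(1)=10$ and $d_H(v)=7$ for $v\neq 1$; it carries only a $\mathbb{Z}/2$ symmetry (the involution $(2\,3)(4\,5)(6\,7)(8\,9)$, fixing vertex $1$), so the $\mathbb{Z}/9$- or $(\mathbb{Z}/3)^2$-invariance you hoped to exploit would not materialize here, though your lower bound $|\E(H)|\ge 21$ and the $256$-bipartition count are both correct.

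One further substantive difference is worth flagging. You propose to establish $\chi(H)\ge 3$ by brute force over the $256$ candidate bipartitions (or a symmetry-reduced subset). The paper does supply such a script, but only as an auxiliary check: the logical core of Lemma~\ref{lem:not-2-colorable} is a short human-readable argument, which first observes that if $\varphi(1)=0$ then the colour-$0$ vertices among $\{2,\dots,9\}$ must form an independent set of an auxiliary $8$-vertex link graph of independence number $3$, and then shows by a small case analysis that every $5$-subset of $\{2,\dots,9\}$ contains an edge of the ``core'' subhypergraph $H[\{2,\dots,9\}]$. This structural route is what makes the example checkable without a computer, and it is absent from your sketch. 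If you do supply an explicit $H$, carrying out something like this two-step argument is what turns your plan into a self-contained proof rather than an appeal to an unstated computation.
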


Theorem~\ref{thm:intro-transversal} shows that the transversal interpretation yields a sharp negative answer to the minimum-degree requirement.
In contrast, Theorem~\ref{thm:intro-chromatic} gives a sharp positive answer under the chromatic-number interpretation (as stated on the Erd\H{o}s Problems website~\cite{ErdosProblems834}).
Our example is simultaneously edge-minimal and vertex-minimal for non-$2$-colourability (Property~B), and we provide explicit certificates in Appendix~\ref{app:certificates}; a short brute-force verification script is included in Appendix~\ref{app:verification}.

\section{Terminology and basic notions}\label{sec:terminology}

\subsection{Hypergraphs, uniformity, and degrees}

A \emph{hypergraph} $H$ is a pair $H=(\V(H),\E(H))$, where $\V(H)$ is a finite set of vertices and $\E(H)$ is a family of subsets of $\V(H)$ called \emph{edges}.
We always assume that $H$ is \emph{simple} (no repeated edges) and has no empty edge.
The hypergraph $H$ is \emph{$r$-uniform} if $|e|=r$ for all $e\in \E(H)$.
In particular, a $3$-uniform hypergraph is often called a \emph{$3$-graph}.

For a vertex $v\in \V(H)$, the \emph{degree} $d_H(v)$ is the number of edges containing $v$:
\[
d_H(v):=\bigl|\{e\in \E(H): v\in e\}\bigr|.
\]
The \emph{minimum degree} is $\del(H):=\min_{v\in \V(H)} d_H(v)$.

If $e\in \E(H)$, we write
\[
H-e := \bigl(\V(H),\,\E(H)\setminus\{e\}\bigr).
\]
If $v\in \V(H)$, then $H-v$ denotes the induced subhypergraph obtained by deleting $v$ and all incident edges:
\[
H-v := H\bigl[\V(H)\setminus\{v\}\bigr].
\]
When discussing vertex-criticality we may disregard isolated vertices, since they play no role for weak colourability but obstruct vertex-minimality; this is in line with standard conventions in criticality theory.\cite{KostochkaWoodall2000}

\subsection{Weak colourings, Property~B, and chromatic criticality}

A \emph{(vertex) $k$-colouring} of a hypergraph $H$ is a map $\varphi:\V(H)\to[k]:=\{1,2,\dots,k\}$.
We call such a colouring \emph{proper} (or \emph{weakly proper}) if no edge is monochromatic, i.e.
\[
\forall e\in \E(H)\qquad \bigl|\varphi(e)\bigr|\ge 2.
\]
This is the standard notion of hypergraph colouring in the set-system and Property~B literature; see, for instance,~\cite{Berge1989,BujtasTuzaVoloshin2015,DurajGutowskiKozik2018}.
It differs from \emph{strong} (or \emph{rainbow}) colourings, where all vertices in every edge must receive distinct colours; strong colourings have their own criticality theory and degree questions (see, e.g.,~\cite{KostochkaWoodall2000}).

The \emph{chromatic number} $\chi(H)$ is the smallest $k$ such that $H$ admits a proper $k$-colouring.
A hypergraph is \emph{$2$-colourable} if $\chi(H)\le 2$.
In the classical terminology of Erd\H{o}s--Hajnal and their successors, $2$-colourability is equivalent to \emph{Property~B}.\cite{DurajGutowskiKozik2018}

Several variants of criticality exist for hypergraphs (edge-minimal, vertex-minimal, or both); see Toft's foundational paper and subsequent work, as well as surveys.\cite{Toft1974,AbbottHare1999,KostochkaSurvey2006}
For the purposes of this paper we adopt the following version, which matches the chromatic interpretation explicitly recorded in~\cite{ErdosProblems834}.

\begin{definition}[Critically $3$-chromatic (weak)]\label{def:critical-3chrom}
A hypergraph $H$ is \emph{critically $3$-chromatic} if
\begin{equation}\label{eq:def-critical-3chrom}
\begin{aligned}
\chi(H)&=3,\\
\chi(H-e)&\le 2 \quad \text{for every } e\in \E(H),\\
\chi(H-v)&\le 2 \quad \text{for every } v\in \V(H).
\end{aligned}
\end{equation}
Equivalently, $H$ is not $2$-colourable, but becomes $2$-colourable after deleting any single edge or any single vertex.
\end{definition}

The condition \eqref{eq:def-critical-3chrom} is strictly stronger than the frequently used \emph{edge-critical} condition (requiring only $\chi(H-e)\le 2$ for all $e$), and also stronger than the \emph{vertex-critical} condition (requiring only $\chi(H-v)\le 2$ for all $v$).
Such distinctions are standard already for graphs and appear in the hypergraph setting as well.\cite{KostochkaWoodall2000,Toft1974}

A convenient feature of edge-criticality in the case $k=3$ is that it admits short checkable certificates in the form of explicit $2$-colourings.

\begin{lemma}[Edge-deletion certificates]\label{lem:edge-cert}
Let $H$ be a hypergraph with $\chi(H)\ge 3$, and let $e\in \E(H)$.
Then $\chi(H-e)\le 2$ if and only if there exists a $2$-colouring $\varphi:\V(H)\to\{1,2\}$ such that $e$ is the \emph{unique} monochromatic edge under $\varphi$.
\end{lemma}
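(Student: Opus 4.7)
The plan is to unpack both directions directly from the definitions, using the fact that $\V(H-e)=\V(H)$ under our convention (edge deletion removes no vertices), so a map $\varphi:\V(H)\to\{1,2\}$ and a map on $\V(H-e)$ are literally the same object. The only nontrivial ingredient is the hypothesis $\chi(H)\ge 3$, which enters in the forward direction to force at least one monochromatic edge to exist.

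For the forward implication, I would start from a proper $2$-colouring $\psi$ witnessing $\chi(H-e)\le 2$ and regard it as a colouring $\varphi$ of $\V(H)$. Properness of $\psi$ on $H-e$ means that no edge $f\in\E(H)\setminus\{e\}$ is monochromatic under $\varphi$. If $e$ were also non-monochromatic under $\varphi$, then $\varphi$ would be a proper $2$-colouring of the whole of $H$, contradicting $\chi(H)\ge 3$. Therefore $e$ is monochromatic, and combined with the previous sentence it is the unique monochromatic edge, which is precisely the claimed witness.

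For the backward implication, suppose $\varphi:\V(H)\to\{1,2\}$ is a $2$-colouring of $H$ whose \emph{only} monochromatic edge is $e$. Then restricting attention to $\E(H)\setminus\{e\}=\E(H-e)$, no edge is monochromatic under $\varphi$, so $\varphi$ is by definition a proper weak $2$-colouring of $H-e$, and hence $\chi(H-e)\le 2$.

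The main obstacle is essentially nonexistent: the lemma is a direct unwrapping of \emph{monochromatic} together with the observation that deleting an edge does not alter the vertex set. The only pitfalls to guard against in the write-up are (i) stating clearly that the same map $\varphi$ serves as a colouring of both $H$ and $H-e$, so no extension or modification is involved, and (ii) making explicit that the hypothesis $\chi(H)\ge 3$ is used only in the $(\Rightarrow)$ direction, to upgrade ``no edge of $H-e$ is monochromatic'' to ``$e$ itself is monochromatic''.
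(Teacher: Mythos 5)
Your proof is correct and follows essentially the same argument as the paper: identify the colouring of $\V(H)$ with a colouring of $\V(H-e)$, use $\chi(H)\ge 3$ to force $e$ to be monochromatic in the forward direction, and simply restrict attention to $\E(H)\setminus\{e\}$ in the backward direction. No meaningful difference.
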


\begin{proof}
If $\chi(H-e)\le 2$, fix a proper $2$-colouring $\varphi$ of $H-e$.
Under the same colouring of $\V(H)$, every edge of $H$ except possibly $e$ is non-monochromatic.
If $e$ were also non-monochromatic, then $\varphi$ would be a proper $2$-colouring of $H$, contradicting $\chi(H)\ge 3$.
Hence $e$ must be monochromatic, and it is the only monochromatic edge.

Conversely, if there exists a $2$-colouring $\varphi$ for which $e$ is the unique monochromatic edge, then every edge of $H-e$ is non-monochromatic under $\varphi$, so $\chi(H-e)\le 2$.
\end{proof}

In particular, for a critically $3$-chromatic hypergraph one may certify edge-minimality by exhibiting, for every $e\in \E(H)$, a $2$-colouring with $e$ as the unique monochromatic edge.
Similarly, vertex-minimality can be certified by providing, for every $v\in \V(H)$, a proper $2$-colouring of $H-v$.

\subsection{Transversals and \texorpdfstring{$\tau$}{tau}-criticality of order 3}

A \emph{transversal} (or \emph{hitting set}) of a hypergraph $H$ is a subset $T\subseteq \V(H)$ such that $T\cap e\neq\emptyset$ for all $e\in \E(H)$.
The minimum size of a transversal is the \emph{transversal number} (or \emph{covering number}) $\tau(H)$.\cite{Berge1989}

\begin{definition}[$\tau$-critical of order $3$]\label{def:tau-critical}
A $3$-uniform hypergraph $H$ is \emph{$\tau$-critical of order $3$} if
\[
\tau(H)=3
\quad\text{and}\quad
\tau(H-e)\le 2 \ \text{ for every } e\in \E(H).
\]
\end{definition}

This matches the ``pair hits all edges except $e$'' formulation explicitly quoted in~\cite{ErdosProblems834}; we will refer to it as \emph{$3$-critical in the transversal sense}.
The parameters $\chi(H)$ and $\tau(H)$ capture different aspects of a hypergraph and lead to genuinely different criticality theories.
For example, a disjoint union of three $3$-edges has $\tau=3$ but is $2$-colourable (hence $\chi=2$), so $\tau(H)=3$ does not force $3$-chromaticity.
Conversely, critically $3$-chromatic $3$-graphs may have transversal number much larger than $3$.
Accordingly, in what follows we treat the chromatic and transversal interpretations separately and avoid using the term ``$3$-critical'' without specifying the intended meaning.

\section{The transversal interpretation: a sharp bound and the impossibility of \texorpdfstring{$\del\ge 7$}{delta>=7}}\label{sec:tau}

We first address the transversal interpretation from Section~\ref{sec:terminology}.
In that setting the Erd\H{o}s--Lov\'asz question has a clean and essentially finite answer: a $3$-graph that is $\tau$-critical of order $3$ has at most $10$ edges, and hence cannot have minimum degree at least $7$.
The main tool is the classical set-pairs inequality of Bollob\'as,\cite{Bollobas1965} which has become a standard ingredient in extremal set theory; see also Katona's short exposition.\cite{Katona1974}
We include a self-contained proof for completeness.

\subsection{Bollob\'as's set-pairs inequality}

Let $X$ be a finite set.
A \emph{set-pair system} is a family of pairs of subsets
\[
\cP=\{(A_i,B_i)\}_{i=1}^m,\qquad A_i,B_i\subseteq X.
\]

\begin{lemma}[Bollob\'as set-pairs inequality]\label{lem:bollobas}
Let $\{(A_i,B_i)\}_{i=1}^m$ be pairs of finite sets such that
\begin{equation}\label{eq:bollobas-hyp}
A_i\cap B_i=\emptyset\quad\text{for all }i,\qquad\text{and}\qquad
A_i\cap B_j\neq\emptyset\quad\text{for all }i\neq j.
\end{equation}
Then
\begin{equation}\label{eq:bollobas}
\sum_{i=1}^m \binom{|A_i|+|B_i|}{|A_i|}^{-1}\le 1.
\end{equation}
\end{lemma}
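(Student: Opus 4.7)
The plan is to prove \eqref{eq:bollobas} by the standard probabilistic argument via random permutations, due to Bollob\'as and streamlined by Katona~\cite{Katona1974}. Let $Y := \bigcup_{i=1}^{m}(A_i\cup B_i)$ and let $\pi$ be a uniformly random linear order on $Y$. For each $i\in\{1,\dots,m\}$, define the event
\[
E_i := \bigl\{\text{under }\pi,\text{ every element of } A_i \text{ precedes every element of } B_i\bigr\}.
\]
The proof has exactly two substantive steps: first, show that $E_1,\dots,E_m$ are pairwise disjoint; second, compute $\Pr(E_i)$ and sum.

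For pairwise disjointness I would argue by contradiction. Suppose $E_i\cap E_j\ne\emptyset$ for some $i\ne j$. The cross-intersection hypothesis in \eqref{eq:bollobas-hyp} yields elements $x\in A_i\cap B_j$ and $y\in A_j\cap B_i$, and these are distinct because $A_i\cap B_i=\emptyset$ (forcing $x\notin B_i$, whereas $y\in B_i$). Occurrence of $E_i$ then forces $x\in A_i$ to precede $y\in B_i$, while occurrence of $E_j$ symmetrically forces $y\in A_j$ to precede $x\in B_j$. This contradicts the linearity of $\pi$, so the events are indeed pairwise disjoint. The main (and essentially only) obstacle in the whole argument lies in this step: one must deploy both halves of the cross-intersection hypothesis symmetrically, and the disjointness $A_i\cap B_i=\emptyset$ is used precisely to guarantee $x\ne y$.

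For the probability computation, condition on the restriction of $\pi$ to $A_i\cup B_i$. Since $A_i$ and $B_i$ are disjoint, this restriction is a uniformly random ordering of $|A_i|+|B_i|$ distinct elements, and $E_i$ occurs iff the first $|A_i|$ positions are occupied by exactly the elements of $A_i$. There are $|A_i|!\,|B_i|!$ such orderings out of $(|A_i|+|B_i|)!$, so
\[
\Pr(E_i) \;=\; \frac{|A_i|!\,|B_i|!}{(|A_i|+|B_i|)!} \;=\; \binom{|A_i|+|B_i|}{|A_i|}^{-1}.
\]
Pairwise disjointness of the events $E_i$ then gives $\sum_{i=1}^{m}\Pr(E_i)\le\Pr\bigl(\bigcup_i E_i\bigr)\le 1$, which is exactly \eqref{eq:bollobas}. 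I would include a brief remark that no assumption on the sizes of the $A_i$ or $B_i$ is used beyond finiteness, so the lemma applies uniformly in the application of Section~\ref{sec:tau}.
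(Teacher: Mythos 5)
Your proposal is correct and follows essentially the same random-permutation argument as the paper, including the same choice of events $E_i$, the same use of both cross-intersection elements $x\in A_i\cap B_j$ and $y\in A_j\cap B_i$ for the disjointness step, and the same probability computation. The one small refinement is that you make explicit the observation that $x\ne y$ (which the paper leaves implicit), a detail that is genuinely needed for the two ordering constraints to clash.
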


\begin{proof}
Let $Y$ be the union of all elements that appear in at least one of the sets $A_i$ or $B_i$.
Fix a uniformly random permutation $\pi$ of $Y$.
For each $i$, define the event
\[
\cE_i=\{\text{in }\pi\text{, every element of }A_i\text{ appears before every element of }B_i\}.
\]
We claim that the events $\cE_1,\dots,\cE_m$ are pairwise disjoint.
Indeed, suppose $\cE_i$ and $\cE_j$ both occur with $i\neq j$.
By \eqref{eq:bollobas-hyp} there exists $x\in A_i\cap B_j$.
Since $\cE_i$ holds, the element $x\in A_i$ must appear before all elements of $B_i$.
Since $\cE_j$ holds and $x\in B_j$, every element of $A_j$ must appear before $x$.
Again by \eqref{eq:bollobas-hyp}, choose $y\in A_j\cap B_i$.
Then $y\in A_j$ implies $y$ appears before $x$, while $y\in B_i$ implies $x$ appears before $y$ (because under $\cE_i$ all elements of $A_i$ precede all elements of $B_i$), a contradiction.
Hence $\cE_i\cap \cE_j=\emptyset$ for $i\neq j$.

Therefore
\[
\sum_{i=1}^m \mathbb{P}(\cE_i)\le 1.
\]
It remains to compute $\mathbb{P}(\cE_i)$.
In the restriction of $\pi$ to $A_i\cup B_i$, all relative orders are equally likely.
Equivalently, the positions occupied by the elements of $A_i$ among the $|A_i|+|B_i|$ positions are a uniformly random $|A_i|$-subset.
The event $\cE_i$ occurs precisely when $A_i$ occupies the first $|A_i|$ positions among $A_i\cup B_i$, which has probability
\[
\mathbb{P}(\cE_i)=\binom{|A_i|+|B_i|}{|A_i|}^{-1}.
\]
Summing over $i$ gives \eqref{eq:bollobas}.
\end{proof}

\subsection{An edge bound for \texorpdfstring{$\tau$}{tau}-critical \texorpdfstring{$3$}{3}-graphs}

\begin{theorem}\label{thm:tau-main}
Let $H$ be a $3$-uniform hypergraph that is $\tau$-critical of order $3$.
Then $|\E(H)|\le 10$.
\end{theorem}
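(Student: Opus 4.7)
The plan is to apply Bollob\'as's set-pairs inequality (Lemma~\ref{lem:bollobas}) to a set-pair system indexed by the edges of $H$. For each edge $e\in \E(H)$, I would take $A_e := e$ and choose $B_e\subseteq \V(H)$ to be any transversal of $H-e$ of size at most $2$, which exists by the definition of $\tau$-criticality of order $3$. Once the Bollob\'as hypotheses are verified for the family $\{(A_e,B_e)\}_{e\in \E(H)}$, the inequality will immediately produce the desired bound.

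The step I expect to be the main (essentially the only) subtlety is the verification that $A_e\cap B_e=\emptyset$, and this is where the assumption $\tau(H)=3$ is used as a \emph{lower} bound, not merely as an upper bound. Indeed, if some vertex $v\in B_e$ also lay in $e$, then $B_e$ would itself be a transversal of the whole hypergraph $H$: by construction it hits every edge of $H-e$, and it hits $e$ via $v$. Since $|B_e|\le 2$, this would contradict $\tau(H)=3$; hence $B_e\cap e=\emptyset$. The other half of the Bollob\'as hypothesis, $A_e\cap B_f\ne\emptyset$ for $e\ne f$, is then immediate: $B_f$ is a transversal of $H-f$ and $e$ is an edge of $H-f$, so $e\cap B_f\ne\emptyset$.

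With both hypotheses in hand, Lemma~\ref{lem:bollobas} yields
\[
\sum_{e\in \E(H)} \binom{|A_e|+|B_e|}{|A_e|}^{-1} \;\le\; 1.
\]
Because $|A_e|=3$ and $|B_e|\le 2$ for every $e$, and because $k\mapsto \binom{3+k}{3}^{-1}$ is nonincreasing in $k$, each summand is at least $\binom{5}{3}^{-1}=\tfrac{1}{10}$. Therefore $|\E(H)|\cdot \tfrac{1}{10}\le 1$, i.e.\ $|\E(H)|\le 10$, which is exactly the claim. No case analysis on $|B_e|\in\{1,2\}$ is needed for the bound itself, since the smaller value of $|B_e|$ only makes the corresponding summand larger and so makes the inequality stronger.
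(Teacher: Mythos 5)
Your proof is correct and follows essentially the same route as the paper: both apply Bollob\'as's set-pairs inequality to the family $\{(e,B_e)\}_{e\in\E(H)}$ where $B_e$ is a size-$\le 2$ transversal of $H-e$, with disjointness $e\cap B_e=\emptyset$ deduced from $\tau(H)=3$ as a lower bound. Your only deviation is a mild streamlining: instead of proving $|B_e|=2$ exactly as the paper does, you observe that $|B_e|\le 2$ already gives each summand at least $\binom{5}{3}^{-1}$, which is a perfectly valid shortcut.
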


\begin{proof}
Fix an edge $e\in \E(H)$.
By $\tau(H-e)\le 2$, there exists a set $B_e\subseteq \V(H)$ with $|B_e|\le 2$ that meets every edge of $H-e$.
We first note that in fact $|B_e|=2$.
Indeed, if $|B_e|=1$, say $B_e=\{x\}$, then $x$ meets every edge of $H-e$.
If $x\in e$, then $\{x\}$ meets every edge of $H$ and $\tau(H)=1$, contradicting $\tau(H)=3$.
If $x\notin e$, then for any $y\in e$ the $2$-set $\{x,y\}$ meets every edge of $H$ (the vertex $x$ meets all edges except possibly $e$, and $y$ meets $e$), implying $\tau(H)\le 2$, again a contradiction.
Thus $|B_e|=2$.

Next we claim that
\begin{equation}\label{eq:Be-disjoint}
B_e\cap e=\emptyset.
\end{equation}
If not, then $B_e$ meets every edge of $H-e$ by definition, and it also meets $e$.
Hence $B_e$ would be a transversal of $H$ of size $2$, contradicting $\tau(H)=3$.
So \eqref{eq:Be-disjoint} holds.

Now define, for each edge $e\in \E(H)$,
\[
A_e:=e,\qquad B_e\ \text{as above}.
\]
Then $A_e\cap B_e=\emptyset$ by \eqref{eq:Be-disjoint}.
Moreover, if $f\in \E(H)$ is a different edge ($f\neq e$), then $f$ is an edge of $H-e$, so it must meet $B_e$, i.e.\ $A_f\cap B_e=f\cap B_e\neq\emptyset$.
Thus the family of pairs $\{(A_e,B_e)\}_{e\in \E(H)}$ satisfies \eqref{eq:bollobas-hyp}.

Applying Lemma~\ref{lem:bollobas} and using $|A_e|=3$ and $|B_e|=2$ for all $e$, we obtain
\[
|\E(H)|\cdot \binom{3+2}{3}^{-1}\le 1,
\]
hence $|\E(H)|\le \binom{5}{3}=10$.
\end{proof}

\begin{remark}
The same argument yields the general pattern: for an $r$-uniform hypergraph with $\tau(H)=t$ and $\tau(H-e)\le t-1$ for all $e$, one obtains $|\E(H)|\le \binom{r+t-1}{r}$.
In our setting $(r,t)=(3,3)$, this gives $10$.
Related applications of the set-pairs method in this context appear in Tuza's work.\cite{Tuza1985}
\end{remark}

\subsection{A sharp minimum-degree consequence}

\begin{corollary}\label{cor:tau-mindeg}
If $H$ is a $3$-uniform hypergraph that is $\tau$-critical of order $3$, then $\del(H)\le 6$.
In particular, there is no such $H$ with $\del(H)\ge 7$.
\end{corollary}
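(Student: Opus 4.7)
The plan is to combine the edge bound from Theorem~\ref{thm:tau-main} with the standard handshake identity and conclude by a purely numerical incompatibility. Since Theorem~\ref{thm:tau-main} already supplies $|\E(H)|\le 10$, the remaining step is routine and requires no further structural analysis; all the real work lives in the Bollob\'as-type argument that has just been carried out.

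First I would invoke Theorem~\ref{thm:tau-main} to get $|\E(H)|\le 10$. Then, counting incident pairs $(v,e)$ with $v\in e$ in two ways in the $3$-uniform hypergraph $H$ gives
\[
\del(H)\cdot |\V(H)| \;\le\; \sum_{v\in \V(H)} d_H(v) \;=\; 3|\E(H)| \;\le\; 30.
\]
Supposing for contradiction that $\del(H)\ge 7$, this forces $|\V(H)|\le 30/7<5$, hence $|\V(H)|\le 4$. On the other hand, in any $3$-graph on $n$ vertices the degree of a vertex is bounded by the number of $2$-subsets of the remaining $n-1$ vertices, i.e.\ $d_H(v)\le \binom{n-1}{2}$; with $n\le 4$ this yields $d_H(v)\le 3$, directly contradicting $\del(H)\ge 7$.

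Therefore $\del(H)\le 6$, and the ``in particular'' clause is just a restatement. I do not anticipate any real obstacle: the combinatorial content of the corollary has been pre-packaged into Theorem~\ref{thm:tau-main}, so the only thing to verify is that a $3$-graph with $\del\ge 7$ simultaneously needs many vertices (to absorb enough incidences) and few vertices (so that $7$-fold containment is feasible), and these requirements are numerically incompatible under the edge cap $|\E(H)|\le 10$.
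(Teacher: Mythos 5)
Your proof is correct and follows the paper's approach: Theorem~\ref{thm:tau-main} gives $|\E(H)|\le 10$, and the handshake identity $\sum_{v} d_H(v)=3|\E(H)|\le 30$ does the rest. The only (minor) difference is the final step: you argue by contradiction, deducing $|\V(H)|\le 4$ from $\del(H)\ge 7$ and then invoking the simpleness bound $d_H(v)\le\binom{|\V(H)|-1}{2}\le 3$, whereas the paper argues directly by noting that $\tau(H)=3$ already forces $|\V(H)|\ge 5$, so the average degree is at most $30/5=6$.
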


\begin{proof}
By Theorem~\ref{thm:tau-main}, $|\E(H)|\le 10$.
Since $H$ is $3$-uniform, the sum of degrees satisfies
\[
\sum_{v\in \V(H)} d_H(v)=3|\E(H)|\le 30.
\]
On the other hand, $\tau(H)=3$ forces $|\V(H)|\ge 5$.
Indeed, if $|\V(H)|\le 4$, then every $2$-subset of $\V(H)$ meets every $3$-subset, so any $3$-uniform hypergraph on at most $4$ vertices has $\tau\le 2$, contradicting $\tau(H)=3$.
Therefore,
\[
\frac{1}{|\V(H)|}\sum_{v\in \V(H)} d_H(v)\le \frac{30}{5}=6.
\]
Since $\del(H)$ is at most the average degree, we obtain $\del(H)\le 6$.
\end{proof}

\subsection{Sharpness: the extremal example}

Let $K^{(3)}_5$ denote the complete $3$-uniform hypergraph on five vertices.
Then it has $10$ edges and is $6$-regular.

\begin{proposition}\label{prop:K5tau}
The hypergraph $K^{(3)}_5$ is $\tau$-critical of order $3$ and satisfies $\del(K^{(3)}_5)=6$.
\end{proposition}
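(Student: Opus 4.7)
The plan is to verify each of the three claims---$\delta=6$, $\tau=3$, and $\tau(H-e)\le 2$ for every edge $e$---directly from the combinatorics of $3$-subsets of a $5$-element ground set. All three follow from elementary pigeonhole arguments on $\binom{[5]}{3}$, so I do not expect any serious obstacle; the main task is just to be clean about the complement-based descriptions.

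First, for the minimum degree, I would fix a vertex $v\in\V(K^{(3)}_5)$ and count the edges through it: such edges are in bijection with the $2$-subsets of $\V(K^{(3)}_5)\setminus\{v\}$, of which there are $\binom{4}{2}=6$. Hence $K^{(3)}_5$ is $6$-regular, so $\del(K^{(3)}_5)=6$.

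Next, for the transversal number, I would show both inequalities. For $\tau(K^{(3)}_5)\ge 3$: given any $S\subseteq\V(K^{(3)}_5)$ with $|S|=2$, the complement $\V(K^{(3)}_5)\setminus S$ is a $3$-element set and hence an edge, and this edge misses $S$; so no $2$-set is a transversal. For $\tau(K^{(3)}_5)\le 3$: given any $3$-set $T$ and any edge $f$, both $T$ and $f$ have size $3$ in a ground set of size $5$, so $|T\cap f|\ge 3+3-5=1$ by inclusion--exclusion, i.e.\ $T$ is a transversal. Thus $\tau(K^{(3)}_5)=3$.

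Finally, for $\tau$-criticality, fix any edge $e\in \E(K^{(3)}_5)$ and set $B_e:=\V(K^{(3)}_5)\setminus e$, so that $|B_e|=2$. I claim $B_e$ is a transversal of $K^{(3)}_5-e$. Indeed, any edge $f\neq e$ of $K^{(3)}_5$ is a $3$-subset of $\V(K^{(3)}_5)$ distinct from $e$; since the only $3$-subset contained in $e$ is $e$ itself, $f$ cannot be contained in $e$, so $f\cap B_e\neq\emptyset$. Hence $\tau(K^{(3)}_5-e)\le 2$ for every edge $e$. Combined with the previous paragraph, this verifies Definition~\ref{def:tau-critical}, completing the proof.
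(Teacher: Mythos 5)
Your proof is correct and takes essentially the same approach as the paper: both verify $\del=6$ by counting $\binom{4}{2}$, establish $\tau\ge 3$ by noting a $2$-set misses its complementary edge and $\tau\le 3$ by the pigeonhole/inclusion--exclusion argument, and certify $\tau(K^{(3)}_5-e)\le 2$ via the complementary pair $[5]\setminus e$. The paper additionally remarks that $\tau(K^{(3)}_5-e)$ equals exactly $2$, but since Definition~\ref{def:tau-critical} only requires $\tau(H-e)\le 2$, your proof is complete as written.
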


\begin{proof}
First, $\tau(K^{(3)}_5)=3$ because no $2$-subset of $[5]$ is a transversal: given distinct $x,y\in[5]$, the complementary $3$-set $[5]\setminus\{x,y\}$ is an edge disjoint from $\{x,y\}$.
On the other hand, any $3$-subset of $[5]$ meets every $3$-subset, so $\tau\le 3$ and hence $\tau=3$.

Now fix an edge $e\in \binom{[5]}{3}$, and let $\{u,v\}=[5]\setminus e$ be its complementary pair.
Then $\{u,v\}$ meets every $3$-subset of $[5]$ other than $e$ (because any other $3$-subset must contain at least one of $u$ or $v$), so $\tau(K^{(3)}_5-e)\le 2$.
As in the proof of Theorem~\ref{thm:tau-main}, $\tau(K^{(3)}_5-e)$ cannot be $1$, so $\tau(K^{(3)}_5-e)=2$.
Thus $K^{(3)}_5$ is $\tau$-critical of order $3$.
Finally, $\del(K^{(3)}_5)=\binom{4}{2}=6$ is immediate.
\end{proof}

Corollary~\ref{cor:tau-mindeg} shows that, under the transversal interpretation of ``$3$-critical'', the minimum-degree requirement $\del\ge 7$ is impossible.
Consequently, any affirmative resolution of the Erd\H{o}s--Lov\'asz question must rely on the chromatic-number interpretation, which we address next.

\section{The chromatic interpretation: a critically \texorpdfstring{$3$}{3}-chromatic \texorpdfstring{$3$}{3}-graph with \texorpdfstring{$\del=7$}{delta=7}}\label{sec:chromatic}

We now turn to the chromatic-number interpretation from Section~\ref{sec:terminology}.
In the language of Property~B, we seek a $3$-uniform hypergraph that is minimally non-$2$-colourable under both edge and vertex deletion, while simultaneously forcing all degrees to be at least $7$.
Colour-critical hypergraphs have been studied since the work of Toft and others,\cite{Toft1974,AbbottHare1999,KostochkaSurvey2006}
but the particular combination ``$3$-uniform + weak colouring + minimum degree constraint + edge- and vertex-criticality'' does not seem to be covered by general structural results.
We therefore give an explicit compact example.

\subsection{The construction}\label{subsec:construction}

\begin{theorem}\label{thm:construction}
There exists a critically $3$-chromatic $3$-uniform hypergraph $H$ with $\del(H)\ge 7$.
In fact, the $3$-graph $H$ defined below on $9$ vertices satisfies $\del(H)=7$.
\end{theorem}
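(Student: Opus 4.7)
The plan is to exhibit an explicit 3-uniform hypergraph $H$ on $V(H) = \{1, 2, \ldots, 9\}$ by listing its edge set, and then to verify each of the four requirements of Definition~\ref{def:critical-3chrom} in turn: (i) $\del(H) = 7$; (ii) $\chi(H) \ge 3$; (iii) $\chi(H - e) \le 2$ for every $e \in \E(H)$; and (iv) $\chi(H - v) \le 2$ for every $v \in \V(H)$. Condition~(i) reduces to a routine count of edges incident to each vertex once $\E(H)$ is written down, and in particular will pin $3|\E(H)| \ge 63$, forcing roughly $21$ edges.

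For (iii) and (iv), the natural strategy is to supply explicit witness 2-colourings. By Lemma~\ref{lem:edge-cert}, proving $\chi(H - e) \le 2$ reduces to exhibiting, for each edge $e$, a 2-colouring $\varphi_e: \V(H) \to \{1, 2\}$ under which $e$ is the unique monochromatic edge; vertex-deletion is analogous, and one supplies a proper 2-colouring $\psi_v$ of $H - v$ for each $v$. I would collect all such $\varphi_e$ and $\psi_v$ in Appendix~\ref{app:certificates}, with correctness reduced to a finite list of ``edge $f$ is bichromatic under $\varphi_e$'' checks that the verification script of Appendix~\ref{app:verification} carries out mechanically.

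The substantive obstacle is (ii), the lower bound $\chi(H) \ge 3$: no 2-colouring of $\V(H)$ can leave every edge of $H$ bichromatic. Because $|\V(H)| = 9$, the space of candidate colourings has size $2^9 = 512$, or $2^8 = 256$ modulo the global colour swap, so exhaustive enumeration is entirely tractable and yields a compact, unambiguous certificate; this is again delivered by the same script. A structural companion argument would assume a hypothetical proper 2-colouring $\varphi$, set $R = \varphi^{-1}(1)$ and $B = \varphi^{-1}(2)$, symmetry-break by bounding $\min(|R|, |B|)$ via a few well-chosen edges of $H$, and then case-analyse the remaining configurations to reach a contradiction. Such an argument adds expository value but is essentially bookkeeping on top of the brute-force verification. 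Accordingly, my plan is to let the appendix-plus-script pair bear the full combinatorial load, with the body of the proof restricted to defining $H$, stating the four conditions, and pointing to the corresponding certificates.
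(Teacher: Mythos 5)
Your proposal is correct in substance and would indeed establish the theorem, but it differs from the paper's proof in where it places the logical weight. For every part except non-$2$-colourability your plan matches the paper's: compute degrees directly, invoke Lemma~\ref{lem:edge-cert} and table certificates for edge-criticality, and exhibit explicit proper $2$-colourings of $H-v$ for vertex-criticality. The substantive divergence is in proving $\chi(H)\ge 3$. You propose to let the exhaustive check over $2^9$ colourings serve as the primary proof, mentioning a possible structural argument only as expository garnish. The paper does the opposite: Lemma~\ref{lem:not-2-colorable} gives a complete, hand-verifiable argument (pass to the link graph $G$ of vertex~$1$, show $\alpha(G)=3$, deduce $|B|\ge 5$, and then show that every $5$-subset of $\{2,\dots,9\}$ spans an edge of the core $H_0$ via a short case analysis on $S\cap\{8,9\}$), and demotes the brute-force script to a redundancy check in Appendix~\ref{app:verification} that is ``not logically required.'' Your route buys brevity and eliminates the case analysis; the paper's route buys a self-contained human-checkable proof and gives structural insight into why this particular $H$ works (the degree-$10$ apex vertex~$1$, the independence number of its link, and the covering property of $H_0$). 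You also silently rely on $\chi(H)\le 3$ following from $\chi(H)\ge 3$ together with $\chi(H-e)\le 2$ for some $e$, which is a valid inference but worth stating explicitly; the paper instead exhibits a direct $3$-colouring in Lemma~\ref{lem:3-colorable}. Finally, your proposal as written never actually produces the edge set of $H$, which is the genuinely difficult part of the theorem; presumably you intend to import it from Appendix~\ref{app:edges}, but the write-up should make that dependence explicit, since without a concrete $H$ there is nothing for the script to verify.
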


\begin{proof}
Let $H$ be the $3$-uniform hypergraph with vertex set
\[
\V(H)=[9]:=\{1,2,3,4,5,6,7,8,9\}
\]
and edge set
\begin{equation}\label{eq:H-edges}
\begin{aligned}
\E(H)=\{&
\{1,2,3\},\{1,2,9\},\{1,3,8\},\{1,4,6\},\{1,4,8\},\{1,4,9\},\\
&
\{1,5,7\},\{1,5,8\},\{1,5,9\},\{1,6,7\},\{2,3,6\},\{2,3,7\},\\
&
\{2,4,9\},\{2,5,9\},\{2,6,7\},\{3,4,8\},\{3,5,8\},\{3,6,7\},\\
&
\{4,6,8\},\{4,6,9\},\{5,7,8\},\{5,7,9\}\}.
\end{aligned}
\end{equation}
For later reference, the first ten edges in \eqref{eq:H-edges} are exactly the edges incident with vertex~$1$, while the remaining twelve edges lie entirely in $\{2,3,4,5,6,7,8,9\}$.
Appendix~\ref{app:edges} repeats \eqref{eq:H-edges} in a machine-readable format.

We will show that $H$ is critically $3$-chromatic and has minimum degree $7$ by proving Lemmas~\ref{lem:degrees}--\ref{lem:3-colorable} and Propositions~\ref{prop:edge-critical}--\ref{prop:vertex-critical}.
\end{proof}

\subsection{Degree computation}

\begin{lemma}\label{lem:degrees}
The hypergraph $H$ defined in \eqref{eq:H-edges} satisfies $d_H(1)=10$ and $d_H(v)=7$ for every $v\in\{2,\dots,9\}$.
In particular, $\del(H)=7$.
\end{lemma}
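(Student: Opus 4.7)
This is a purely arithmetical lemma, so the plan is to read the incidences directly off the edge list \eqref{eq:H-edges}. I would first exploit the structural observation already recorded in the paper: the first ten edges in \eqref{eq:H-edges} are precisely those containing vertex~$1$, and the remaining twelve edges lie in $\{2,\dots,9\}$. This immediately gives $d_H(1)=10$ without any further scanning.

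For each vertex $v\in\{2,\dots,9\}$, the plan is to tabulate the edges of \eqref{eq:H-edges} that contain $v$ and verify that each such list has exactly seven entries. A compact presentation is a table with one row per $v$; for illustration, the row for $v=2$ reads $\{1,2,3\},\{1,2,9\},\{2,3,6\},\{2,3,7\},\{2,4,9\},\{2,5,9\},\{2,6,7\}$, of length $7$, and analogous rows of length $7$ exist for $v=3,\dots,9$. The conclusion $\del(H)=7$ then follows from $\min\{10,7,7,7,7,7,7,7,7\}=7$.

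The main obstacle is presentational rather than mathematical: the check is finite but must be laid out so that a reader can audit it line by line. A handshake identity on its own is insufficient, since $\sum_{v\in[9]} d_H(v)=3|\E(H)|=66=10+7\cdot 8$ only confirms consistency with the claimed degree sequence and does not, by itself, rule out other sequences summing to $66$. I would therefore keep the per-vertex tabulation explicit and let the handshake identity serve as a one-line sanity check on the table.
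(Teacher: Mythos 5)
Your proof is correct and follows essentially the same approach as the paper: $d_H(1)=10$ is read off from the first ten edges, and $d_H(v)=7$ for $v\ge 2$ is verified by direct tabulation, with $v=2$ given as the illustrative case. The additional handshake-identity sanity check is a sensible (and correctly caveated) supplement but does not change the argument.
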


\begin{proof}
From \eqref{eq:H-edges}, vertex~$1$ lies in the ten edges listed on the first line, so $d_H(1)=10$.
For $v\in\{2,\dots,9\}$, a direct count from \eqref{eq:H-edges} shows that exactly seven edges contain~$v$.
For instance,
\[
d_H(2)=|\{123,129,236,237,249,259,267\}|=7,
\]
and the remaining cases are analogous.
Thus $\del(H)=7$.
\end{proof}

\subsection{\texorpdfstring{$H$}{H} is not \texorpdfstring{$2$}{2}-colourable}\label{subsec:not-2-colorable}

We next give a self-contained proof that $H$ has no proper $2$-colouring (i.e.\ fails Property~B), and hence $\chi(H)\ge 3$.
The proof is designed to be checkable directly from \eqref{eq:H-edges}; it exploits the separation between edges through vertex~$1$ and the ``core'' on $\{2,\dots,9\}$.

\begin{lemma}\label{lem:not-2-colorable}
The $3$-graph $H$ defined in \eqref{eq:H-edges} is not $2$-colourable.
\end{lemma}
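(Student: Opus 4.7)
The plan is to derive a contradiction from a hypothetical proper $2$-colouring $\varphi\colon V(H)\to\{1,2\}$. Split $E(H)$ into the ten \emph{link edges} through vertex~$1$ and the twelve \emph{core edges} inside $V':=\{2,\dots,9\}$. Let $L_1$ be the link graph of $1$ on $V'$, whose edges are the pairs $\{a,b\}$ with $\{1,a,b\}\in E(H)$, and let $H'$ be the $3$-graph on $V'$ whose edge set consists of the twelve core edges. Without loss of generality $\varphi(1)=1$; set $S:=\{v\in V':\varphi(v)=1\}$. Two elementary translations are then immediate: no link edge is monochromatic iff $S$ is an independent set in $L_1$, while no core edge is monochromatic iff both $S$ and $V'\setminus S$ are transversals of $H'$. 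It therefore suffices to prove the two numerical bounds
\[
\alpha(L_1)\le 3\qquad\text{and}\qquad\tau(H')\ge 4,
\]
which together force $4\le|S|\le 3$, a contradiction.

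For $\alpha(L_1)\le 3$, I would use the decomposition $V'=\{2,3,5,8,9\}\sqcup\{4,6,7\}$. Reading the link pairs off from \eqref{eq:H-edges}, $L_1$ induces the $5$-cycle $2\text{-}3\text{-}8\text{-}5\text{-}9\text{-}2$ on the first part and the path $4\text{-}6\text{-}7$ on the second; both have independence number~$2$, so $\alpha(L_1)\le 4$. Equality would force $S\cap\{4,6,7\}=\{4,7\}$, but then the three remaining cross-edges $4\text{-}8$, $4\text{-}9$, $5\text{-}7$ of $L_1$ exclude $5,8,9$ from $S$, leaving $S\cap\{2,3,5,8,9\}\subseteq\{2,3\}$---a pair connected by the link edge $2\text{-}3$. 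Hence $\alpha(L_1)\le 3$.

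For $\tau(H')\ge 4$, I would use the partition $V'=A\sqcup B$ with $A=\{2,3,6,7\}$ and $B=\{4,5,8,9\}$. A direct check against \eqref{eq:H-edges} shows that all four $3$-subsets of $A$ are core edges, so $H'[A]=K^{(3)}_4$; hence any transversal $U$ of $H'$ satisfies $|U\cap A|\ge\tau(K^{(3)}_4)=2$. Suppose toward contradiction $|U|=3$. If $|U\cap A|=3$ then $U\subseteq A$, so for every mixed edge (one vertex in $A$, two in $B$) its unique $A$-vertex must lie in $U$; since each of $2,3,6,7$ is the $A$-vertex of at least one mixed edge (e.g.\ $2\in\{2,4,9\}$), one would need $U\supseteq A$, contradicting $|U|=3$. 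If $|U\cap A|=2$, write $U\cap B=\{b\}$. For each of the two vertices $a\in A\setminus U$, both mixed edges through $a$ must contain $b$, forcing $b$ to be the unique common $B$-vertex of that pair of edges; a line-by-line inspection of \eqref{eq:H-edges} gives $b=9,8,4,5$ for $a=2,3,6,7$ respectively, so the two vertices $a\in A\setminus U$ demand two distinct values of $b$---impossible. Thus no $3$-transversal exists.

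The two finite bounds carry all of the content. The bound $\alpha(L_1)\le 3$ is short once the $C_5+P_3$ decomposition is spotted. The bound $\tau(H')\ge 4$ is the main obstacle: naive matching or degree arguments stop at $\tau(H')\ge 3$, so pushing the bound to $4$ requires exploiting the specific structure of $H'$---the embedded $K^{(3)}_4$ on $A$ together with the rigid ``fan'' configuration of mixed edges which pins down a unique common $B$-vertex for each $a\in A$.
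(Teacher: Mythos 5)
Your proof is correct, and the overall strategy is the same as the paper's: reduce non-$2$-colourability to two independent finite bounds, namely an upper bound on the independence number of the link graph of vertex~$1$ and a bound on the core $3$-graph $H'$ on $\{2,\dots,9\}$. (Your $\tau(H')\ge 4$ is, via $\alpha(H')=8-\tau(H')$, exactly equivalent to the paper's claim that $H'$ has no independent set of size~$5$.) Where you differ from the paper is in \emph{how} you establish these two bounds. For $\alpha(L_1)\le 3$ the paper runs a direct case analysis on whether $8$ or $9$ lies in the independent set, whereas you observe that $L_1$ decomposes as a $5$-cycle $2\text{--}3\text{--}8\text{--}5\text{--}9\text{--}2$ plus a path $4\text{--}6\text{--}7$ with three cross-edges and push the obvious bound $4$ down to $3$ by analysing the unique extremal configuration; this is shorter and more structural. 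For the core bound the paper again argues by cases on $|S\cap\{8,9\}|$, whereas you isolate the $K^{(3)}_4$ on $A=\{2,3,6,7\}$ and the rigid fan of mixed edges, which makes the combinatorial reason for the bound quite transparent. One small point of exposition: after ``suppose $|U|=3$'' you should note that transversals of size at most $2$ are also excluded; this follows either by padding to size~$3$ (possible since $|V(H')|\ge 3$), or directly since $|U\cap A|\ge 2$ forces $U\subseteq A$ with $|U|=2$, and then two vertices of $A$ are missed with no $B$-vertex to cover their mixed edges. With that noted, the argument is complete.
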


\begin{proof}
Suppose, for a contradiction, that $\varphi:\V(H)\to\{0,1\}$ is a proper $2$-colouring of $H$ (no monochromatic edge).
By swapping the colours if necessary, we may assume
\begin{equation}\label{eq:phi1}
\varphi(1)=0.
\end{equation}
Let
\[
\begin{aligned}
Z&:=\{v\in\{2,\dots,9\}:\varphi(v)=0\},\\
B&:=\{v\in\{2,\dots,9\}:\varphi(v)=1\}=\{2,\dots,9\}\setminus Z.
\end{aligned}
\]

\smallskip
\noindent\emph{Step 1: $|Z|\le 3$.}
Consider the graph $G$ on vertex set $\{2,\dots,9\}$ where $\{x,y\}$ is an edge of $G$ precisely when $\{1,x,y\}$ is an edge of $H$.
From the first line of \eqref{eq:H-edges}, we have
\begin{equation}\label{eq:G-edges}
\begin{aligned}
\E(G)=\{&
\{2,3\},\{2,9\},\{3,8\},\{4,6\},\{4,8\},\\
&
\{4,9\},\{5,7\},\{5,8\},\{5,9\},\{6,7\}\}.
\end{aligned}
\end{equation}
Because of \eqref{eq:phi1}, for every edge $\{1,x,y\}\in \E(H)$ we cannot have $\varphi(x)=\varphi(y)=0$, otherwise $\{1,x,y\}$ would be monochromatic $0$.
Equivalently, $Z$ is an independent set of $G$.

We claim that $\alpha(G)=3$, and hence $|Z|\le 3$.
Indeed, if $8\in Z$, then by \eqref{eq:G-edges} we have $3,4,5\notin Z$, so $Z\subseteq\{2,6,7,8,9\}$ with $8$ fixed.
Within $\{2,6,7,9\}$, the edges $\{2,9\}$ and $\{6,7\}$ again appear in \eqref{eq:G-edges}, so we may choose at most two additional vertices; thus $|Z|\le 3$.
The same argument applies if $9\in Z$.
If neither $8$ nor $9$ lies in $Z$, then $Z\subseteq\{2,3,4,5,6,7\}$.
The induced subgraph on $\{4,5,6,7\}$ is the path $4$--$6$--$7$--$5$ by \eqref{eq:G-edges}, so it has independence number $2$, and the remaining edge $\{2,3\}$ contributes at most one more vertex; hence again $|Z|\le 3$.
This proves the claim, and therefore
\begin{equation}\label{eq:B-size}
|B|=8-|Z|\ge 5.
\end{equation}

\smallskip
\noindent\emph{Step 2: every $5$-subset of $\{2,\dots,9\}$ contains an edge of $H$ not using vertex $1$.}
Let $H_0:=H[\{2,\dots,9\}]$.
Its edge set consists of the last twelve edges in \eqref{eq:H-edges}:
\begin{equation}\label{eq:H0-edges}
\begin{aligned}
\E(H_0)=\{&
\{2,3,6\},\{2,3,7\},\{2,4,9\},\{2,5,9\},\{2,6,7\},\\
&
\{3,4,8\},\{3,5,8\},\{3,6,7\},\{4,6,8\},\{4,6,9\},\\
&
\{5,7,8\},\{5,7,9\}\}.
\end{aligned}
\end{equation}
We claim that $H_0$ has no independent set of size $5$.
Let $S\subseteq\{2,\dots,9\}$ with $|S|=5$; we show that $S$ contains an edge from \eqref{eq:H0-edges}.
Set $A:=\{2,3,6,7\}$.
Every $3$-subset of $A$ is an edge of $H_0$, namely
\[
\binom{A}{3}=\bigl\{\{2,3,6\},\{2,3,7\},\{2,6,7\},\{3,6,7\}\bigr\}\subseteq \E(H_0).
\]

We distinguish cases according to $S\cap\{8,9\}$.

\smallskip
\noindent\emph{Case 1: $S\cap\{8,9\}=\emptyset$.}
Then $S\subseteq A\cup\{4,5\}$, and since $|S|=5$ we have $|S\cap A|\ge 3$.
Hence $S$ contains a $3$-subset of $A$, and therefore contains an edge of $H_0$.

\smallskip
\noindent\emph{Case 2: $|S\cap\{8,9\}|=1$.}
By symmetry we may assume $8\in S$ and $9\notin S$.
If $|S\cap A|\ge 3$ we are done, so assume $|S\cap A|\le 2$.
Then $S\setminus\{8\}\subseteq A\cup\{4,5\}$ has size $4$, and having at most two vertices in $A$ forces $\{4,5\}\subseteq S$.
Thus $\{4,5,8\}\subseteq S$ and $S$ contains exactly two vertices from $A$.
However, the edges of $H_0$ containing $8$ are
\[
\{3,4,8\},\ \{3,5,8\},\ \{4,6,8\},\ \{5,7,8\}.
\]
Since $4,5\in S$, avoiding an edge forces $3,6,7\notin S$, so $S\cap A\subseteq\{2\}$, contradicting that $S$ contains two vertices from $A$.
Hence $S$ must contain an edge of $H_0$.
The case $9\in S$ and $8\notin S$ is analogous (using the edges $\{2,4,9\},\{2,5,9\},\{4,6,9\},\{5,7,9\}$).

\smallskip
\noindent\emph{Case 3: $S\cap\{8,9\}=\{8,9\}$.}
Let $T:=S\setminus\{8,9\}$, so $|T|=3$ and $T\subseteq A\cup\{4,5\}$.
If $T\subseteq A$, then $T$ is a $3$-subset of $A$ and hence an edge of $H_0$.
So assume $T$ contains $4$ or $5$.

If $4\in T$ and $S$ contains no edge, then $T$ cannot contain $2$ (else $\{2,4,9\}\subseteq S$), cannot contain $3$ (else $\{3,4,8\}\subseteq S$), and cannot contain $6$ (else $\{4,6,8\}\subseteq S$ or $\{4,6,9\}\subseteq S$).
Thus $T=\{4,5,7\}$, which yields $\{5,7,8\}\subseteq S$, an edge of $H_0$.
Similarly, if $5\in T$, then to avoid edges we must have $T=\{4,5,6\}$, which yields $\{4,6,8\}\subseteq S$, again an edge.
Therefore $S$ contains an edge of $H_0$ in this case as well.

This proves that every $5$-subset of $\{2,\dots,9\}$ contains an edge of $H_0$.

\smallskip
\noindent\emph{Step 3: conclusion.}
By \eqref{eq:B-size}, the set $B$ has size at least $5$, hence $B$ contains an edge of $H_0$ by Step~2.
But all vertices in $B$ have colour $1$ by definition, so that edge is monochromatic~$1$, contradicting that $\varphi$ is a proper $2$-colouring.
\end{proof}

\subsection{\texorpdfstring{$H$}{H} is \texorpdfstring{$3$}{3}-colourable}

\begin{lemma}\label{lem:3-colorable}
The hypergraph $H$ admits a proper $3$-colouring.
Hence $\chi(H)=3$.
\end{lemma}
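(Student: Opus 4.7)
Since Lemma~\ref{lem:not-2-colorable} already gives $\chi(H)\ge 3$, the plan is simply to exhibit a single proper $3$-colouring of $H$; any one will upgrade the inequality to equality. The cleanest strategy is to exploit vertex~$1$: if we assign vertex~$1$ a colour used by no other vertex, then every one of the ten edges of $H$ through~$1$ is automatically non-monochromatic, and the task reduces to properly $2$-colouring the induced subhypergraph $H_0:=H[\{2,\dots,9\}]$ whose twelve edges are listed in~\eqref{eq:H0-edges}.

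By Step~2 of the proof of Lemma~\ref{lem:not-2-colorable}, every $5$-subset of $\{2,\dots,9\}$ contains an edge of $H_0$, so in any proper $2$-colouring of $H_0$ both colour classes must have size at most~$4$, hence exactly~$4$. A natural candidate is the ``small vs.\ large'' partition $B:=\{2,3,4,5\}$ and $C:=\{6,7,8,9\}$. By direct inspection of the twelve edges in~\eqref{eq:H0-edges}, each edge contains at least one vertex from $B$ and at least one from $C$; equivalently, no edge of $H_0$ is contained in either $B$ or $C$. Consequently the map
\[
\varphi(1):=1,\qquad \varphi|_B\equiv 2,\qquad \varphi|_C\equiv 3
\]
is a proper $3$-colouring of $H$, yielding $\chi(H)\le 3$; combined with Lemma~\ref{lem:not-2-colorable} we conclude $\chi(H)=3$.

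The only genuine obstacle here is the initial guesswork of locating a suitable partition of $\{2,\dots,9\}$ into two independent sets of size~$4$. Step~2 of Lemma~\ref{lem:not-2-colorable} already pins the independence number of $H_0$ at~$4$, so only two complementary independent $4$-sets need be found; the symmetric split by magnitude happens to work and requires no searching. Had it failed, one could instead place vertex~$1$ inside a colour class together with an independent set of the auxiliary graph $G$ from~\eqref{eq:G-edges} (whose independence number is~$3$), providing ample additional flexibility — but no such refinement is needed.
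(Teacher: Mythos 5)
Your proof is correct and takes essentially the same approach as the paper's: exhibit an explicit proper $3$-colouring and combine with Lemma~\ref{lem:not-2-colorable} to get equality. The specific colouring you choose (vertex~$1$ alone in its own class, then $\{2,3,4,5\}$ vs.\ $\{6,7,8,9\}$ for $H_0$) differs from the paper's $\psi$ but is arguably slightly more transparent, since isolating vertex~$1$ trivially handles all ten edges through it and reduces the check to verifying that neither $\{2,3,4,5\}$ nor $\{6,7,8,9\}$ contains an edge of~$H_0$.
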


\begin{proof}
Define $\psi:\V(H)\to\{1,2,3\}$ by
\[
\begin{aligned}
\psi(1)&=\psi(2)=\psi(4)=\psi(5)=1,\\
\psi(3)&=\psi(6)=\psi(8)=\psi(9)=2,\\
\psi(7)&=3.
\end{aligned}
\]
One checks directly from \eqref{eq:H-edges} that no edge is monochromatic under $\psi$.
By Lemma~\ref{lem:not-2-colorable}, $H$ is not $2$-colourable, so $\chi(H)=3$.
\end{proof}

\subsection{Edge-criticality}

\begin{proposition}\label{prop:edge-critical}
For every edge $e\in \E(H)$ we have $\chi(H-e)\le 2$.
\end{proposition}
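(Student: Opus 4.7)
The plan is to reduce the proposition to a certificate check via Lemma~\ref{lem:edge-cert}. That lemma says $\chi(H-e)\le 2$ if and only if there is a $2$-colouring $\varphi_e:\V(H)\to\{0,1\}$ under which $e$ is the \emph{unique} monochromatic edge of $H$. Since $|\E(H)|=22$, this turns the proposition into a finite task: for each of the $22$ edges, exhibit such a $\varphi_e$ and point to Lemma~\ref{lem:edge-cert}.

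Concretely, I would record the list of witnesses in Appendix~\ref{app:certificates} as a table whose row indexed by $e$ gives the labels $\varphi_e(1),\dots,\varphi_e(9)$. The proof of the proposition then amounts to three lines: invoke Lemma~\ref{lem:edge-cert}; refer to the table; and note that for each row one checks directly against \eqref{eq:H-edges} that every edge other than $e$ has both colours present. Doing this by hand is $22\times 22=484$ three-element lookups, so Appendix~\ref{app:verification} carries out the same inspection mechanically, which both certifies the table and documents reproducibility.

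The only nontrivial step is constructing the certificate in the first place: for each $e$ one needs a $\{0,1\}$-colouring in which $e$ is monochromatic but no other edge is. For an edge $e$ disjoint from vertex~$1$, a convenient template is to set $\varphi_e(1)=0$, paint $e$ monochromatically in colour~$1$, and fill the remaining four coordinates so as to break monochromaticity of the other $21$ edges; for edges through vertex~$1$ one swaps colours and argues symmetrically. In practice the simplest way to produce the table is to enumerate all $2^9=512$ colourings of $\V(H)$ for each $e$ and select one in which $e$ is the lone monochromatic edge.

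The main obstacle in this scheme is conceptual rather than technical: a priori some $e$ might admit \emph{no} such colouring, which would falsify edge-criticality. The nontrivial content of Proposition~\ref{prop:edge-critical} is thus the existence of $22$ compatible witnesses simultaneously, a property specific to the particular $H$ built in~\eqref{eq:H-edges}. Once the search in Appendix~\ref{app:verification} returns a valid row for every $e$, Lemma~\ref{lem:edge-cert} applied $22$ times closes the proof.
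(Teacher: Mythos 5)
Your proposal is correct and is essentially the paper's own proof: invoke Lemma~\ref{lem:edge-cert} to reduce the claim to exhibiting, for each edge $e$, a $2$-colouring with $e$ as the unique monochromatic edge, then defer to the explicit certificate table in Appendix~\ref{app:certificates} (checked mechanically in Appendix~\ref{app:verification}). The extra discussion of how one might generate the certificates is sensible context but does not alter the argument.
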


\begin{proof}
By Lemma~\ref{lem:edge-cert}, it suffices to exhibit, for each $e\in \E(H)$, a $2$-colouring of $\V(H)$ in which $e$ is the unique monochromatic edge.
Such certificates are listed explicitly in Appendix~\ref{app:certificates} (Table~\ref{tab:edge-cert}).
\end{proof}

\subsection{Vertex-criticality}

\begin{proposition}\label{prop:vertex-critical}
For every vertex $v\in \V(H)$ we have $\chi(H-v)\le 2$.
\end{proposition}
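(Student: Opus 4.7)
The plan mirrors Proposition~\ref{prop:edge-critical}: for each of the nine vertices $v \in \V(H)$, I would exhibit an explicit proper $2$-colouring of $H - v$, and collect these nine certificates in Appendix~\ref{app:certificates}. Each certificate is then verified by direct inspection against \eqref{eq:H-edges}, and the brute-force script of Appendix~\ref{app:verification} confirms the full list mechanically.

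First I would treat the case $v = 1$ separately. Removing vertex~$1$ deletes the ten edges incident with $1$ and leaves the induced subhypergraph $H_0 = H[\{2,\dots,9\}]$ whose twelve edges appear in \eqref{eq:H0-edges}. Step~2 of the proof of Lemma~\ref{lem:not-2-colorable} established that every $5$-subset of $\{2,\dots,9\}$ contains an edge of $H_0$; hence any proper $2$-colouring of $H - 1$ must be a balanced bipartition of $\{2,\dots,9\}$ into two independent sets of $H_0$ of size exactly $4$. A short direct search among such equipartitions, checked against \eqref{eq:H0-edges}, produces a suitable colouring.

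For each $v \in \{2,\dots,9\}$ the target hypergraph $H - v$ has $8$ vertices and $22 - d_H(v) = 15$ edges. A natural starting point is the proper $3$-colouring $\psi$ of Lemma~\ref{lem:3-colorable}: merging two of its three colour classes produces a $2$-colouring of $\V(H)$ whose monochromatic edges are exactly those contained in the merged pair of classes. For each $v$ I would then select a pair of $\psi$-classes to merge and, if necessary, locally recolour one or two vertices so that every remaining monochromatic edge of $H$ contains $v$; restricting the resulting map to $\V(H)\setminus\{v\}$ then yields a proper $2$-colouring of $H - v$.

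The main obstacle is not conceptual but combinatorial bookkeeping. Lemma~\ref{lem:not-2-colorable} guarantees that none of the nine colourings can be extended to a proper $2$-colouring of all of $H$, so each certificate sits on the boundary of $2$-colourability and must be individually tailored to the vertex being removed. Once the nine colourings are tabulated in Appendix~\ref{app:certificates}, however, the proposition reduces to checking at most $15$ edges per case, which is short enough to verify by hand and is in any event discharged by the script of Appendix~\ref{app:verification}.
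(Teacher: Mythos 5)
Your proposal matches the paper's proof, which likewise simply exhibits explicit $2$-colourings for each $H-v$ (Table~\ref{tab:vertex-cert}), with the script in Appendix~\ref{app:verification} performing the mechanical verification. The extra observations you make---that by Step~2 of Lemma~\ref{lem:not-2-colorable} a proper $2$-colouring of $H-1$ must split $\{2,\dots,9\}$ into two $4$-sets, and that the other eight certificates can be found by merging two colour classes of $\psi$ and locally recolouring---are sound heuristics for locating the witnesses, but they are not part of the paper's argument, which treats the certificates as given.
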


\begin{proof}
For each $v\in \V(H)$, Appendix~\ref{app:certificates} (Table~\ref{tab:vertex-cert}) lists an explicit proper $2$-colouring of $H-v$.
\end{proof}

\begin{proof}[Proof of Theorem~\ref{thm:intro-chromatic}]
Let $H$ be the hypergraph defined in \eqref{eq:H-edges}.
By Lemmas~\ref{lem:not-2-colorable} and~\ref{lem:3-colorable}, we have $\chi(H)=3$.
By Propositions~\ref{prop:edge-critical} and~\ref{prop:vertex-critical}, deleting any edge or any vertex makes $H$ $2$-colourable.
Thus $H$ is critically $3$-chromatic in the sense of Definition~\ref{def:critical-3chrom}.
Finally, Lemma~\ref{lem:degrees} gives $\del(H)=7$.
\end{proof}

\section{Concluding remarks and reproducibility}\label{sec:concluding}

The Erd\H{o}s--Lov\'asz question on minimum degree $7$ for ``$3$-critical'' $3$-uniform hypergraphs splits naturally into two non-equivalent interpretations that both occur in the literature and are explicitly highlighted in modern discussions.\cite{ErdosProblems834}
In the transversal sense, Section~\ref{sec:tau} shows that the problem has a sharp negative answer: a $\tau$-critical $3$-graph of order $3$ has at most $10$ edges and therefore satisfies $\del\le 6$, with equality attained by $K^{(3)}_5$.
In the chromatic sense, Section~\ref{sec:chromatic} gives a sharp positive answer by exhibiting an explicit critically $3$-chromatic $3$-graph on $9$ vertices with minimum degree $7$.

Finally, we record the reproducibility aspects of the chromatic-critical construction.
Appendix~\ref{app:edges} provides the edge set of $H$ in both mathematical and machine-readable form.
Appendix~\ref{app:certificates} lists explicit $2$-colouring certificates for each edge deletion and each vertex deletion, which constitute a complete finite proof of Propositions~\ref{prop:edge-critical} and~\ref{prop:vertex-critical}.
Appendix~\ref{app:verification} contains a short Python~3 script (using only the standard library) that exhaustively checks all claims about $H$ by brute force over the $2^9$ possible $2$-colourings, and also validates every certificate listed in Appendix~\ref{app:certificates}.
This computation is not logically required, but it provides an additional layer of protection against transcription errors and makes the example straightforward to reuse in further work.

\clearpage
\appendix

\section{The edge set of \texorpdfstring{$H$}{H}}\label{app:edges}

For convenience, we reproduce the edge set $\E(H)$ from \eqref{eq:H-edges} in a machine-readable format suitable for direct input into a verification script.

\subsection*{Machine-readable edge list}

\begin{lstlisting}
1 2 3
1 2 9
1 3 8
1 4 6
1 4 8
1 4 9
1 5 7
1 5 8
1 5 9
1 6 7
2 3 6
2 3 7
2 4 9
2 5 9
2 6 7
3 4 8
3 5 8
3 6 7
4 6 8
4 6 9
5 7 8
5 7 9
\end{lstlisting}

\section{Certificates for edge- and vertex-criticality}\label{app:certificates}

We represent a $2$-colouring $\varphi:\V(H)\to\{0,1\}$ by its \emph{blue set}
\[
B(\varphi):=\{v\in \V(H): \varphi(v)=1\}.
\]
Vertices not in $B(\varphi)$ are coloured $0$ (red).

\subsection{Edge-deletion certificates}

For each edge $e\in \E(H)$, Table~\ref{tab:edge-cert} lists a blue set $B(\varphi)$ such that $e$ is the unique monochromatic edge of $H$ under the corresponding colouring $\varphi$.
By Lemma~\ref{lem:edge-cert}, this certifies $\chi(H-e)\le 2$ for every $e$.

\begin{longtable}{@{}ll@{}}
\caption{Edge-deletion certificates: $B(\varphi)$ makes $e$ the unique monochromatic edge.}\label{tab:edge-cert}\\
\toprule
Edge $e$ & Blue vertices $B(\varphi)$\\
\midrule
\endfirsthead
\toprule
Edge $e$ & Blue vertices $B(\varphi)$\\
\midrule
\endhead
\midrule
\multicolumn{2}{r}{\small Continued on next page}\\
\endfoot
\bottomrule
\endlastfoot
$\{1,2,3\}$ & $\{6,7,8,9\}$ \\
$\{1,2,9\}$ & $\{3,4,5,6\}$ \\
$\{1,3,8\}$ & $\{2,4,5,6\}$ \\
$\{1,4,6\}$ & $\{2,7,8,9\}$ \\
$\{1,4,8\}$ & $\{3,5,6,9\}$ \\
$\{1,4,9\}$ & $\{2,5,6,8\}$ \\
$\{1,5,7\}$ & $\{2,6,8,9\}$ \\
$\{1,5,8\}$ & $\{3,4,7,9\}$ \\
$\{1,5,9\}$ & $\{2,4,7,8\}$ \\
$\{1,6,7\}$ & $\{2,3,4,5\}$ \\
$\{2,3,6\}$ & $\{1,7,8,9\}$ \\
$\{2,3,7\}$ & $\{1,6,8,9\}$ \\
$\{2,4,9\}$ & $\{1,3,5,6\}$ \\
$\{2,5,9\}$ & $\{1,3,4,7\}$ \\
$\{2,6,7\}$ & $\{1,3,4,5\}$ \\
$\{3,4,8\}$ & $\{1,2,5,6\}$ \\
$\{3,5,8\}$ & $\{1,2,4,7\}$ \\
$\{3,6,7\}$ & $\{1,2,4,5\}$ \\
$\{4,6,8\}$ & $\{1,3,7,9\}$ \\
$\{4,6,9\}$ & $\{1,2,7,8\}$ \\
$\{5,7,8\}$ & $\{1,3,6,9\}$ \\
$\{5,7,9\}$ & $\{1,2,6,8\}$ \\
\end{longtable}

\subsection{Vertex-deletion certificates}

For each vertex $v\in \V(H)$, Table~\ref{tab:vertex-cert} lists a blue set $B(\varphi)\subseteq \V(H)\setminus\{v\}$ defining a proper $2$-colouring of $H-v$.

\begin{table}[t]
\centering
\small
\begin{tabular}{@{}ll@{}}
\toprule
Deleted vertex $v$ & Blue vertices $B(\varphi)$ in $H-v$\\
\midrule
1 & $\{2,3,4,5\}$ \\
2 & $\{1,3,4,5\}$ \\
3 & $\{1,2,4,5\}$ \\
4 & $\{1,2,5,6\}$ \\
5 & $\{1,2,4,7\}$ \\
6 & $\{1,2,4,5\}$ \\
7 & $\{1,2,4,5\}$ \\
8 & $\{1,2,4,7\}$ \\
9 & $\{1,2,6,8\}$ \\
\bottomrule
\end{tabular}
\caption{Vertex-deletion certificates: $B(\varphi)$ gives a proper $2$-colouring of $H-v$.}
\label{tab:vertex-cert}
\end{table}

\section{Verification script}\label{app:verification}

The following Python~3 script (standard library only) verifies:
(i) the minimum degree $\del(H)=7$;
(ii) $H$ is not $2$-colourable;
(iii) $H-e$ is $2$-colourable for every edge $e$;
(iv) $H-v$ is $2$-colourable for every vertex $v$;
and (v) all certificates in Appendix~\ref{app:certificates}.

\begin{lstlisting}
# verify_critical_hypergraph.py
# Pure Python 3, no dependencies.

from itertools import product

V = list(range(1, 10))

E = [
    (1, 2, 3), (1, 2, 9), (1, 3, 8), (1, 4, 6), (1, 4, 8), (1, 4, 9),
    (1, 5, 7), (1, 5, 8), (1, 5, 9), (1, 6, 7),
    (2, 3, 6), (2, 3, 7), (2, 4, 9), (2, 5, 9), (2, 6, 7),
    (3, 4, 8), (3, 5, 8), (3, 6, 7),
    (4, 6, 8), (4, 6, 9),
    (5, 7, 8), (5, 7, 9),
]

# Edge-deletion certificates from Table B.1 (blue sets)
EDGE_CERTS = {
    (1, 2, 3): {6, 7, 8, 9},
    (1, 2, 9): {3, 4, 5, 6},
    (1, 3, 8): {2, 4, 5, 6},
    (1, 4, 6): {2, 7, 8, 9},
    (1, 4, 8): {3, 5, 6, 9},
    (1, 4, 9): {2, 5, 6, 8},
    (1, 5, 7): {2, 6, 8, 9},
    (1, 5, 8): {3, 4, 7, 9},
    (1, 5, 9): {2, 4, 7, 8},
    (1, 6, 7): {2, 3, 4, 5},
    (2, 3, 6): {1, 7, 8, 9},
    (2, 3, 7): {1, 6, 8, 9},
    (2, 4, 9): {1, 3, 5, 6},
    (2, 5, 9): {1, 3, 4, 7},
    (2, 6, 7): {1, 3, 4, 5},
    (3, 4, 8): {1, 2, 5, 6},
    (3, 5, 8): {1, 2, 4, 7},
    (3, 6, 7): {1, 2, 4, 5},
    (4, 6, 8): {1, 3, 7, 9},
    (4, 6, 9): {1, 2, 7, 8},
    (5, 7, 8): {1, 3, 6, 9},
    (5, 7, 9): {1, 2, 6, 8},
}

# Vertex-deletion certificates from Table B.2 (blue sets on V\{v})
VERTEX_CERTS = {
    1: {2, 3, 4, 5},
    2: {1, 3, 4, 5},
    3: {1, 2, 4, 5},
    4: {1, 2, 5, 6},
    5: {1, 2, 4, 7},
    6: {1, 2, 4, 5},
    7: {1, 2, 4, 5},
    8: {1, 2, 4, 7},
    9: {1, 2, 6, 8},
}

def degs(edges):
    d = {v: 0 for v in V}
    for a, b, c in edges:
        d[a] += 1
        d[b] += 1
        d[c] += 1
    return d

def is_mono(edge, col):
    a, b, c = edge
    return col[a] == col[b] == col[c]

def induced_edges(edges, verts):
    verts = set(verts)
    return [e for e in edges if set(e) <= verts]

def proper_2coloring_exists(edges, verts):
    verts = list(verts)
    for bits in product([0, 1], repeat=len(verts)):
        col = {v: bits[i] for i, v in enumerate(verts)}
        if all(not is_mono(e, col) for e in edges):
            return True, col
    return False, None

def monochromatic_edges(edges, col):
    return [e for e in edges if is_mono(e, col)]

def main():
    # Degree check
    d = degs(E)
    print("degrees:", d)
    print("min degree:", min(d.values()))
    assert min(d.values()) == 7

    # Check H not 2-colorable
    ok, _ = proper_2coloring_exists(E, V)
    print("H 2-colorable?", ok)
    assert not ok

    # Check edge-criticality by brute force
    for e in E:
        edges2 = [f for f in E if f != e]
        ok, _ = proper_2coloring_exists(edges2, V)
        if not ok:
            raise AssertionError(f"H-e not 2-colorable for e={e}")
    print("Edge-criticality verified by brute force.")

    # Check vertex-criticality by brute force
    for v in V:
        verts2 = [u for u in V if u != v]
        edges2 = induced_edges(E, verts2)
        ok, _ = proper_2coloring_exists(edges2, verts2)
        if not ok:
            raise AssertionError(f"H-v not 2-colorable for v={v}")
    print("Vertex-criticality verified by brute force.")

    # Validate edge certificates
    for e, B in EDGE_CERTS.items():
        col = {v: (1 if v in B else 0) for v in V}
        monos = monochromatic_edges(E, col)
        if monos != [e]:
            raise AssertionError(f"Bad edge certificate for e={e}: mono edges={monos}")
    print("All edge certificates validated.")

    # Validate vertex certificates
    for v, B in VERTEX_CERTS.items():
        verts2 = [u for u in V if u != v]
        edges2 = induced_edges(E, verts2)
        col = {u: (1 if u in B else 0) for u in verts2}
        if any(is_mono(e, col) for e in edges2):
            raise AssertionError(f"Bad vertex certificate for v={v}")
    print("All vertex certificates validated.")

    print("All checks passed.")

if __name__ == "__main__":
    main()
\end{lstlisting}


\begin{thebibliography}{99}

\bibitem{ErdosLovasz1975}
P.~Erd\H{o}s, L.~Lov\'asz,
\emph{Problems and results on $3$-chromatic hypergraphs and some related questions},
in: A.~Hajnal, R.~Rado, V.~T.~S\'os (eds.),
\emph{Infinite and Finite Sets} (Proc.\ Colloq., Keszthely, Hungary, 1973), Vol.~II,
Colloq. Math. Soc. J\'anos Bolyai, Vol.~10,
North-Holland, Amsterdam, 1975, pp.~609--627.
MR0382050.

\bibitem{ErdosProblems834}
Erd\H{o}s Problems and Results,
Problem~\#834 (``Three-chromatic hypergraphs''),
\url{https://www.erdosproblems.com/834} (accessed 19~December~2025).

\bibitem{Berge1989}
C.~Berge,
\emph{Hypergraphs: Combinatorics of Finite Sets},
North-Holland Mathematical Library, Vol.~45,
North-Holland, Amsterdam, 1989.
ISBN 978-0-444-87489-4.
\doi{10.1016/S0924-6509(08)70093-X}.

\bibitem{BujtasTuzaVoloshin2015}
C.~Bujt\'as, Z.~Tuza, V.~I.~Voloshin,
Hypergraph colouring,
in: L.~W.~Beineke, R.~J.~Wilson (eds.),
\emph{Topics in Chromatic Graph Theory},
Encyclopedia of Mathematics and its Applications,
Cambridge University Press, Cambridge, 2015, pp.~230--254.
\doi{10.1017/CBO9781139519793.014}.

\bibitem{Bollobas1965}
B.~Bollob\'as,
On generalized graphs,
Acta Math.\ Acad.\ Sci.\ Hungar.\ 16 (1965) 447--452.
\doi{10.1007/BF01904851}.

\bibitem{Katona1974}
G.~O.~H.~Katona,
Solution of a problem of A.~Ehrenfeucht and J.~Mycielski,
J.\ Combin.\ Theory Ser.\ A 17 (1974) 265--266.
\doi{10.1016/0097-3165(74)90018-1}.

\bibitem{Tuza1985}
Z.~Tuza,
Critical hypergraphs and intersecting set-pair systems,
J.\ Combin.\ Theory Ser.\ B 39 (1985) 134--145.
\doi{10.1016/0095-8956(85)90043-7}.

\bibitem{Toft1974}
B.~Toft,
Color-critical graphs and hypergraphs,
J.\ Combin.\ Theory Ser.\ B 16 (1974) 145--161.
\doi{10.1016/0095-8956(74)90057-4}.

\bibitem{AbbottHare1999}
H.~L.~Abbott, D.~R.~Hare,
Square critically $3$-chromatic hypergraphs,
Discrete Math.\ 197--198 (1999) 3--13.
\doi{10.1016/S0012-365X(99)90031-6}.

\bibitem{KostochkaWoodall2000}
A.~V.~Kostochka, D.~R.~Woodall,
On the number of edges in hypergraphs critical with respect to strong colourings,
European J.\ Combin.\ 21 (2000) 249--255.
\doi{10.1006/eujc.1999.0330}.

\bibitem{KostochkaSurvey2006}
A.~V.~Kostochka,
Color-Critical Graphs and Hypergraphs with Few Edges: A Survey,
in: E.~Gy\H{o}ri, G.~O.~H.~Katona, L.~Lov\'asz, T.~Fleiner (eds.),
\emph{More Sets, Graphs and Numbers},
Bolyai Society Mathematical Studies, Vol.~15,
Springer, Berlin, 2006, pp.~175--197.
\doi{10.1007/978-3-540-32439-3_9}.

\bibitem{DurajGutowskiKozik2018}
L.~Duraj, G.~Gutowski, J.~Kozik,
A Note on Two-Colorability of Nonuniform Hypergraphs,
in: I.~Chatzigiannakis, C.~Kaklamanis, D.~Marx, D.~Sannella (eds.),
\emph{ICALP 2018: 45th International Colloquium on Automata, Languages, and Programming},
LIPIcs, Vol.~107,
Schloss Dagstuhl--Leibniz-Zentrum f{\"u}r Informatik, 2018, Art.~46, pp.~46:1--46:13.
\doi{10.4230/LIPIcs.ICALP.2018.46}.

\end{thebibliography}
\end{document}